\documentclass[10pt]{article}

\usepackage{authblk}
\usepackage{amsfonts,amsmath,amsthm}
\usepackage{tabularx}
\usepackage{graphicx}
\usepackage{adjustbox}
\usepackage{multirow}
\usepackage{enumitem}
\usepackage{graphicx}
\usepackage{url}
\usepackage{color}

\definecolor{maroon}{rgb}{.69,.188,.376}
\definecolor{darkgreen}{rgb}{0,.5,0}
\definecolor{darkblue}{rgb}{0,0,.5}
\definecolor{magenta}{rgb}{1,0,1}

\usepackage[mathscr]{euscript}		

\usepackage{dsfont}

\usepackage{psfrag}			
\usepackage[colorlinks=true]{hyperref}
\hypersetup{pdftex, colorlinks=true, linkcolor=maroon, citecolor=maroon,
  filecolor=blue,urlcolor=blue}

\newtheorem{thm}{Theorem}
\newtheorem{theorem}[thm]{Theorem}
\newtheorem{lemma}[thm]{Lemma}

\theoremstyle{remark}
\newtheorem{remark}{Remark}

\newcolumntype{C}[1]{>{\Centering}p{#1}}
\newcommand{\rupee}{Rs.}

\topmargin = -2cm %
\oddsidemargin =0cm%
\textwidth = 16.5cm%
\textheight= 24.5cm%
\marginparwidth=57pt

\usepackage{color}
\definecolor{Red}{rgb}{1,0,0}
\definecolor{Blue}{rgb}{0,0,1}
\definecolor{Olive}{rgb}{0.41,0.55,0.13}
\definecolor{Yarok}{rgb}{0,0.5,0}
\definecolor{Green}{rgb}{0,1,0}
\definecolor{MGreen}{rgb}{0,0.8,0}
\definecolor{DGreen}{rgb}{0,0.55,0}
\definecolor{Yellow}{rgb}{1,1,0}
\definecolor{Cyan}{rgb}{0,1,1}
\definecolor{Magenta}{rgb}{1,0,1}
\definecolor{Orange}{rgb}{1,.5,0}
\definecolor{Violet}{rgb}{.5,0,.5}
\definecolor{Purple}{rgb}{.75,0,.25}
\definecolor{Brown}{rgb}{.75,.5,.25}
\definecolor{Grey}{rgb}{.7,.7,.7}
\definecolor{Black}{rgb}{0,0,0}

\title{COVID-19: Optimal Design of Serosurveys for Disease Burden Estimation}

\author[1]{Siva Athreya\thanks{Corresponding Author: athreya@isibang.ac.in}}
\author[2]{Giridhara R. Babu}
\author[3]{Aniruddha Iyer}
\author[3]{Mohammed Minhaas B.S.}
\author[3]{Nihesh Rathod}
\author[3]{Sharad Shriram}
\author[3,4]{Rajesh Sundaresan}
\author[3]{Nidhin Koshy Vaidhiyan}
\author[3]{Sarath Yasodharan}
\affil[1]{Indian Statistical Institute, Bangalore Centre}
\affil[2]{Indian Institute of Public Health, Bangalore}
\affil[3]{Indian Institute of Science, Bangalore}
\affil[4]{Strand Life Sciences}

\begin{document}

\maketitle
\begin{abstract}
We provide a methodology by which an epidemiologist may arrive at an
optimal design for a survey whose goal is to estimate the disease
burden in a population. For serosurveys with a given  budget of $C$
rupees, a specified set of tests with costs, sensitivities, and
specificities, we show the existence of optimal designs in four
different contexts, including the well known c-optimal
design. Usefulness of the results are illustrated via numerical
examples. Our results are applicable to a wide range of
epidemiological surveys under the assumptions that the estimate's
Fisher-information matrix satisfies a uniform positive definite
criterion.
\end{abstract}

{\em Keywords:} c-optimal design, serosurvey, COVID-19, worst-case design, Fisher Information, weighted estimate, adjusted estimate.

{\em AMS Classification:} 62K05, 62P10
\section{Introduction}

For {\em evidence-based} public health management during a pandemic, epidemiologists conduct surveys to estimate the total disease burden. The surveys use multiple tests on participants to gather evidence of both past infection and active infection. For example, the recently concluded first round of the Karnataka state COVID-19 serosurvey \cite{202012medrxiv_BabEtAl} involved three kinds of tests on participants: the serological test on serum of venous blood for detection of immunoglobulin G (IgG) antibodies, the rapid antigen detection test (RAT), and the quantitative reverse transcription polymerase chain reaction (RTPCR) test for detection of viral RNA. The tests provide different kinds of information, e.g. past infection or active infection. The tests have remarkable variation in reliability, e.g., the RAT has sensitivity (i.e probability of a positive test given that the patient has the disease.
) of about 50\% while the RTPCR test has a sensitivity of 95\% or higher (assuming no sample collection- or transportation-related issues); there is variation, albeit less spectacular, in specificities (i.e probability of a negative test given that the patient is well.
). The test costs also vary, e.g., the RAT cost was \rupee\,450 while the RTPCR test cost was \rupee\,1600 on 10 December 2020. Conducting all three tests on each participant is ideal but can lead to high cost.

Suppose $J$ is  the set of tests and  $T = 2^J \setminus \emptyset$ be all the nontrivial subsets of tests. If a subset of tests $t \in T$ is used on a participant, the cost $c_t$ is the sum of the costs of the individual tests constituting $t$. A survey design is defined as $(w_t, t \in T)$, where $w_t \geq 0$ is the number of participants who are administered the subset $t$ of tests. We relax the integrality requirement\footnote{Integrality can be met if we randomise and constrain the expectation.} on $w_t$ and will assume $w_t \in \mathbb{R}_+$. The cost of this design is obviously $\sum_{t \in T} w_t c_t$. The question of interest is:

\begin{quotation}
\noindent {\em Given a budget of C rupees, a set of tests, their costs, their sensitivities, specificities, and the goal of estimating the disease burden, what is a good survey design?}
\end{quotation}

In this short note we address the above question for four specific optimisation criteria. Motivated by the sero-survey discussed above we assume a specific model (see Section \ref{sec:model}) and find the best design for each criteria. In our first result, Theorem \ref{thm:allocation-reduction}, we show the existence of a c-optimal design. We then consider a worst-case design in Theorem \ref{thm:worst-case-design}, optimal design of surveys conducted across different strata in Theorem \ref{thm:weighted}, and optimal design when additional
information about the participants such as symptoms are observed in Theorem \ref{thm:symptom}. We also present numerical examples\footnote{The source code (in R and Python) that was used to generate the results corresponding to Theorems \ref{thm:allocation-reduction} and \ref{thm:worst-case-design} in Table \ref{tab:numericalExamples} can be found at the following URL:
\url{https://github.com/cni-iisc/optimal-sero-survey-design}} illustrating the applicability of the theorems, see in particular Table \ref{tab:numericalExamples} in Section \ref{sec:num}. The results are not limited to the assumed settings and can be applied to more general epidemiological surveys as well, see Remark \ref{rem:gen}.

\subsection{Model and Main Results} \label{sec:model}

\begin{figure}
  \centering
  \includegraphics[scale=0.5]{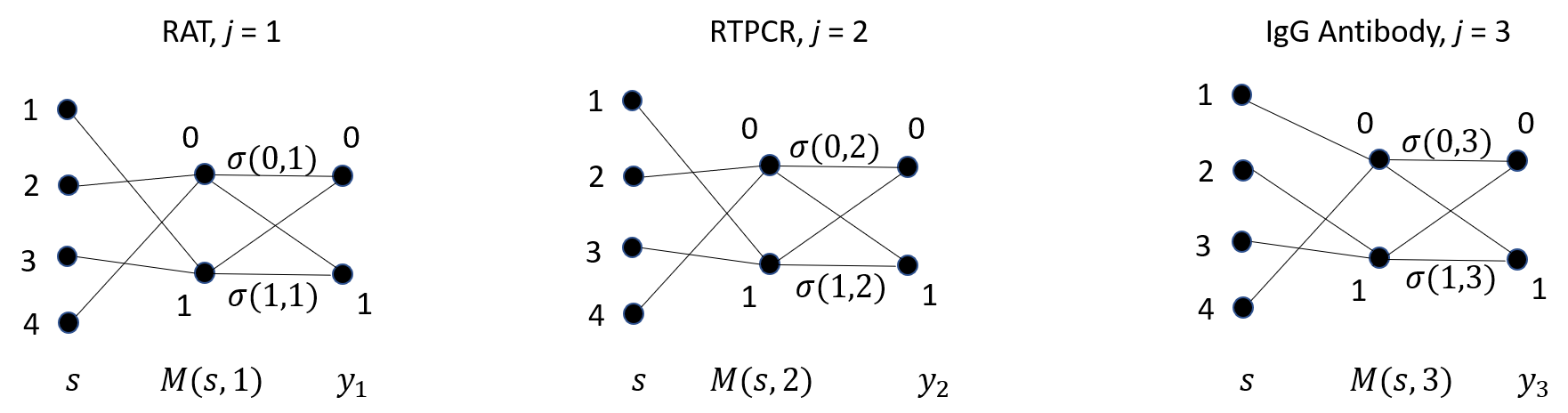}
   \caption{\small Noise model for the RAT, RTPCR, and antibody test outcomes. The left-most subfigure is for RAT, $j=1$. The middle subfigure is for the RTPCR test, $j=2$. The right-most figure is for the antibody test, $j=3$. In each subfigure, the first connection is a ``noiseless'' channel indicated by $M(s,j)$. In each subfigure, the second channel is a noisy binary asymmetric channel whose correct outcome probabilities are given by the specificity $\sigma(0,j)$ and sensitivity $\sigma(1,j)$. The values are indicated in Table \ref{tab:sens-spec}. The cross-over probabilities are 1 minus these.}
   \label{fig:channel}
\end{figure}

As discussed in the introduction, the primary motivation for this note  was to find optimal designs when one needs to use multiple tests to estimate COVID-19 burden in a region. We shall formulate our results with the model used in the Karnataka serosurvey, see \cite{202012medrxiv_BabEtAl}. As we will see in Remark \ref{rem:gen} our results have wider applicability to other models and surveys.

Each individual of the population will
be assumed to be in one of four states. Namely: (i) having active
infection but no antibodies, (ii) having antibodies but no evidence of
active infection, (iii) having both antibodies and active infection,
and (iv) having neither active infection nor antibodies. The state of
the participant in a survey is inferred from the RAT, the RT-PCR, and
the antibody test outcomes, or a subset thereof. As the tests have
known sensitivities and specificities their outcomes are thus noisy
observations of the (unknown) disease state of the individual.

The main focus is on estimating the total disease burden, which is
$\wp := p_1 + p_2 + p_3$, where $p_i$ are the probabilities of the
various states and are unknown to the epidemiologist (See Table
\ref{tab:states-nominal-responses}). We will assume the parameter
space $\mathcal{P}$ to be a convex and compact subset of
$[0,1]^3$. Let the indices $j=1, 2$, and $3$ stand for RAT, RTPCR and
antibody tests, respectively. The last three columns of Table
\ref{tab:states-nominal-responses} provide the nominal responses on
the RAT, RTPCR and the antibody tests for each of the four
states. Write $M(s,j)$ for the nominal test outcome for an individual
in state $s$; $M(s,j) = 1$ indicates a nominal positive outcome, and
$M(s,j) = 0$ indicates a nominal negative outcome.

\begin{table}[ht]
    \centering
    \caption{\small Table of states and nominal test responses $M(s,j)$}
    \begin{tabular}{|c|c|p{2.5in}|c|c|c|}
    \hline
    State $s$ & Probability & State description & RAT & RTPCR & Antibody \\
              &             &      & $j=1$ & $j=2$ & $j=3$ \\
    \hline
    $s = 1$ & $p_1$ & Active infection but no antibodies & 1 & 1 & 0 \\
    \hline
    $s = 2$ & $p_2$ & Antibodies present but no evidence of active infection & 0 & 0 & 1 \\
    \hline
    $s = 3$ & $p_3$ & Simultaneous presence of active infection and antibodies & 1 & 1 & 1 \\
    \hline
    $s = 4$ & $\displaystyle 1 - \sum_{i=1}^3p_i$ & Neither active infection nor antibodies & 0 & 0 & 0 \\
    \hline
    \end{tabular}
    \label{tab:states-nominal-responses}
\end{table}

 As alluded to earlier, the actual test outcomes may differ from the nominal outcomes. In Figure \ref{fig:channel}, there are three {\em channels} for the three tests. Each channel has a deterministic binary-output channel followed by a noisy binary asymmetric channel. The deterministic output is $M(s,j)$ for test $j$ on input $s$. Let $\sigma(0,j)$ denote the specificity of test $j$, and let $\sigma(1,j)$ denote the sensitivity of test $j$. The final output of test $j$ is the output of the noisy binary asymmetric channel with this sensitivity and specificity, see Figure \ref{fig:channel}. We can thus view each test outcome (RAT or RT-PCR or antibody) as the output of two channels in tandem, with the disease state as input.

 \begin{table}[ht]
    \centering
    \caption{\small The sensitivities and specificities}
    \label{tab:sens-spec}
    \begin{tabular}{|c|l|l|l|}
    \hline
    $\sigma(m,j)$  & RAT & RTPCR test & Antibody test\\
                   & $j=1$ & $j=2$ & $j=3$ \\
    \hline
    Specificities ($m=0$) & 0.975 & 0.97 & 0.977 \\
    \hline
    Sensitivities ($m=1$) & 0.5 & 0.95 & 0.921 \\
    \hline
    \end{tabular}
\end{table}

 The specificities and sensitivities used for the estimations\footnote{The reliability indicators for the antibody test were measured on the same lot and batch used for the Karnataka sero-survey \cite{202012medrxiv_BabEtAl} and those for the other tests are the same as those used therein.} are in Table \ref{tab:sens-spec}.  Let $t = (t_1, t_2, t_3)$ represent the subset of tests carried out, or a {\em test pattern}. If $t_j = 1$, then test $j$ had a valid outcome; if $t_j = 0$, then test $j$ was either not conducted or had an invalid outcome. Let $y = (y_1, y_2, y_3)$, where $y_j \in \{ 0, 1, \textsf{NA}\}$. The three values  indicating a negative outcome, a positive outcome or an invalid outcome, respectively.   Let $\mathcal{Y}_t$ denote the set of possible outcomes for the subset of tests $t$.

 We assume that the test outcomes of an individual are conditionally independent given the disease state of the individual\footnote{The assumption of independence is a strong one. Indeed it is possible that a symptomatic person can get tested earlier and during the infectious state, leading to a higher predictive value for the test. For simplicity, we do not model these complications.}. Then for an individual in state $s$ with test pattern $t = (t_1, t_2, t_3)$, the conditional probability of test outcome $y = (y_1, y_2, y_3)$ is given by
\begin{equation}
  \label{eqn:conditionalprobability}
  q(y|s,t) = \prod_{j : t_j = 1} [\sigma(M(s,j),j)]^{{\bf 1}\{M(s,j) = y_j \} } \cdot [1 - \sigma(M(s,j),j))]^{1 - {\bf 1}\{M(s,j) = y_j \}}
\end{equation}
where $M$ and $\sigma$ are given in Tables \ref{tab:states-nominal-responses} and \ref{tab:sens-spec}, respectively.

This leads to a parametric model for the probabilities of test outcomes (observations) given the disease-state probabilities (parameters of the model) denoted $p = (p_1, p_2, p_3)$. Then total burden will be a linear combination $u^T p$ (i.e. $u =(1,1,1)$). The Fisher information matrix, denoted $I_t(p)$, captures information on the reliability of the outcomes when a subset $t$ of tests is conducted on the participant. The $(i,j)$th entry of $I_t(p)$ for the above model can be expressed as
\begin{align}
\sum_{y \in \mathcal{Y}_t} \frac{( q(y|i,t) - q(y|4,t))( q(y|j,t) - q(y|4,t))^T}{\sum_{s=1}^3 p_s q(y|s,t) + (1-p_1-p_2-p_3)q(y|4,t)}. \label{eqn:fisher-information-matrix}
\end{align}

For a given survey design $(w_t, t \in T)$, using the above likelihood function the maximum likelihood estimate (MLE) can be obtained, say $\hat{p}$. Standard results ({e.g. \cite[Chapter 4]{poor-estimation}}) then will assert the consistency and asymptotic normality of the MLE, $\hat{p}$,  as the number of samples goes to infinity while keeping the survey design proportion fixed. We are now ready to state our main results.

\noindent 1. {\em The c-optimal design}.  By the asymptotic optimality property, the estimated total burden $u^T \hat{p}$ is approximately Gaussian with mean $u^T p$ and variance $u^T \left( \sum_{t \in T} w_t I_t(p) \right)^{-1} u$. This suggests the use of the so-called {\em local c-optimal} design criterion. {The adjective `local' refers to local optimality at $p$ and the criterion is called c-optimal because the variance of $c^T \hat{p}$ was minimised, for a vector $c$, in \cite[Chapter~5]{pronzato-pazman-13}.}. Accuracy of the estimate is generally a combination of precision (narrow confidence intervals) and unbiasedness. Broadly speaking, given a budget, c-optimal design attempts to resolve the tension between the accuracy of the estimate and cost efficiency. We shall refer to a survey design $(w^*_t, t \in T)$ as c-optimal if it minimizes the variance $$u^T \left( \sum_{t \in T} w_t I_t(p) \right)^{-1} u$$ given the budget $C$. Our first result is on {\em the c-optimal design}.

\begin{theorem}
  \label{thm:allocation-reduction}
  Let the budget $C>0$, let the parameter $p$ be fixed.
  Then the c-optimal survey
design $(w^*_t, t \in T)$ is given by
\begin{equation}
\label{eqn:w^*_t}
w^*_t = v^*_t C/c_t, \quad t \in T,
\end{equation}
where $v^*$ is a solution to the optimisation problem:
\begin{align}
\text{minimise} & \qquad  a(v;p) := u^T \left( \sum\limits_{t \in T} v_t I_t(p)/c_t \right)^{-1} u  \label{eqn:objective-reduction}\\
\text{subject to} & \qquad \sum\limits_{t \in T} v_t = 1, \quad v_t \geq 0, \, t \in T. \nonumber
\end{align}
Furthermore, the minimum variance is given by $a(v^*;p)/C$.
\end{theorem}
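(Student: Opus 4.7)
The plan is to reduce the constrained minimization over $(w_t)_{t\in T}$ to a minimization of $a(v;p)$ over the probability simplex via a change of variables, and then invoke a compactness argument for existence. I would first re-state the original problem as: minimise $u^T(\sum_t w_t I_t(p))^{-1} u$ subject to $\sum_t w_t c_t \le C$ and $w_t\ge 0$. The natural change of variables is $v_t := w_t c_t / C$, equivalently $w_t = v_t C/c_t$, under which the budget constraint transforms into $\sum_t v_t \le 1$ with $v_t\ge 0$, and the objective transforms into
\begin{equation*}
u^T\Bigl(\sum_{t\in T} w_t I_t(p)\Bigr)^{-1} u \;=\; \frac{1}{C}\, u^T\Bigl(\sum_{t\in T} v_t I_t(p)/c_t\Bigr)^{-1} u \;=\; \frac{a(v;p)}{C}.
\end{equation*}

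Next, I would argue that any optimal $v^*$ must satisfy $\sum_t v_t^*=1$, i.e.\ the budget is exhausted. This uses the monotonicity of the map $A\mapsto u^T A^{-1} u$ on the cone of positive definite matrices: if $A\preceq B$ with both positive definite, then $u^T A^{-1} u \ge u^T B^{-1} u$. Hence, if $\sum_t v_t<1$ at a candidate optimum, rescaling $v$ by the factor $1/\sum_t v_t >1$ produces a strictly larger positive semidefinite sum $\sum_t v_t I_t(p)/c_t$ and weakly decreases the objective. Consequently, the inequality constraint can be tightened to $\sum_t v_t=1$ without loss of optimality, yielding the optimisation problem stated in \eqref{eqn:objective-reduction}.

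It then remains to establish existence of a minimiser on the simplex $\Delta := \{v\in\mathbb{R}_+^{|T|} : \sum_t v_t = 1\}$. The simplex is compact, and I would extend $a(\cdot;p)$ to take the value $+\infty$ whenever $\sum_t v_t I_t(p)/c_t$ is singular. With the uniform positive definiteness assumption alluded to in the abstract, there is at least one $v\in\Delta$ (for instance, the uniform distribution over $T$) making the sum positive definite, so the extended objective is not identically $+\infty$. On the open subset of $\Delta$ where the sum is positive definite, the map $v\mapsto a(v;p)$ is continuous because matrix inversion and the quadratic form $u^T(\cdot)^{-1} u$ are continuous in the matrix entries; across the boundary between positive definite and merely positive semidefinite sums, $u^T A^{-1} u$ blows up, so the extended $a(\cdot;p)$ is lower semicontinuous on all of $\Delta$. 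A lower semicontinuous function on a compact set attains its infimum, giving $v^*$.

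Finally, reversing the change of variables yields $w_t^* = v_t^* C/c_t$ as the c-optimal design, and substituting back gives minimum variance $a(v^*;p)/C$. The main obstacle in the plan is the subtlety around singular $I_t(p)$: individual tests may not identify all three parameters, so the Fisher matrices at vertices of $\Delta$ can be rank deficient, and one must carefully justify lower semicontinuity and nonemptiness of the effective domain using the uniform positive definiteness hypothesis; the remaining steps are essentially mechanical.
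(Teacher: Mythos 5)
Your route is essentially the paper's: the change of variables $v_t = w_t c_t/C$ turning the objective into $a(v;p)/C$, budget exhaustion by scaling up any $v$ with $\sum_t v_t<1$, and existence of $v^*$ by compactness. (The paper additionally records convexity of $a(\cdot;p)$ on a sublevel set, which it needs later for the KKT structure and the worst-case theorem, but for bare existence your argument is structurally the same.)

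One step is justified by a false claim. You assert that the extension of $a(\cdot;p)$ by $+\infty$ wherever $\sum_t v_t I_t(p)/c_t$ is singular is lower semicontinuous because ``$u^TA^{-1}u$ blows up'' as $A$ degenerates. That is not true in general: take $A_n=\mathrm{diag}(1,1/n)\to A=\mathrm{diag}(1,0)$ and $u=(1,0)^T$; then $u^TA_n^{-1}u\equiv 1$ while your extension assigns $+\infty$ at $A$, so the extended function is not lsc there. Blow-up occurs only when $u$ fails to lie in the range of the limiting matrix. The standard repair is to use the genuinely lsc convex extension $\sup_x\,\bigl(2u^Tx-x^TAx\bigr)$, which equals $u^TA^{-1}u$ for $A\succ 0$, equals $u^TA^{\dagger}u$ when $u\in\mathrm{range}(A)$, and is $+\infty$ otherwise; being a supremum of affine functions of $A$ it is lsc, so the infimum over the simplex is attained, and one then argues (using (A1)) that a minimiser with a nonsingular information matrix can be selected, or simply restricts to the sublevel set $\{v: a(v;p)\le a(v^{(0)};p)\}$ as the paper does. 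In fairness, the paper's own closedness argument for its set $\mathcal{C}$ quietly relies on the same semicontinuity, so this is a shared wrinkle; but as written your existence step rests on an incorrect general statement and needs this fix.
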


In other words, for c-optimality, we must allocate $v^*_t$ fraction of
the budget $C$ to the subset of tests $t$; the number of participants
$w^*_t$ that should be administered the subset $t$ of tests is
therefore given by \eqref{eqn:w^*_t}. Note that $v^*$ is independent
of the budget $C$. A subset $t$ may be highly informative, i.e., $u^T
I_t(p)^{-1} u$ may be small, yet $t$ may not be attractive when
relativised by its cost $c_t$ because what ultimately matters is $u^T
(I_t(p)/c_t)^{-1} u$; see \eqref{eqn:objective-reduction}. Section
\ref{sec:proof-A} contains the proof with additional structural
results on the support of the solution. The last statement of the
theorem gives the level of accuracy\footnote{We will assume that the bias is negligible.} (asymptotic variance of the
re-scaled estimate) that the budget can buy: accuracy is inversely
proportional to the budget. It also suggests how to solve a related
problem: what is the required budget to meet a certain margin of error
target with confidence $1-\alpha$? The answer to this question is to
find the $C$ that lowers $\Phi^{-1}(\alpha/2) \sqrt{a(v^*; p)/C}$
below the target margin of error, where $\Phi$ is the normal
cumulative distribution function.

\noindent 2. {\em Worst-case design.} Theorem \ref{thm:allocation-reduction} is a locally optimal design at parameter $p$, and the optimal design $v^*$ may depend on $p$ in general. Whenever we want to highlight this dependence, we will write $v^*(p)$. In practice, we often design for a guessed $p$; for e.g., the Karnataka serosurvey \cite{202012medrxiv_BabEtAl} assumed a seroprevalence of 10\% prevalence to arrive at the required number of antibody tests. One could also use a prior for $p$ and extend Theorem \ref{thm:allocation-reduction} straightforwardly. Our next result, however, is a {\em design for the worst case}, i.e., a design $(w_t, t \in T)$ that minimises $$\max\limits_{p \in \mathcal{P}} u^T \left( \sum\limits_{t \in T} w_t I_t(p) \right)^{-1} u,$$ where $\mathcal{P}$ is a convex and compact uncertainty region for the parameter, subject to the budget constraint $C$.

As optimal designs for the worst case usually go, our solution will involve the Nash equilibria of a two-player zero-sum game $G$ defined as follows: The pay-off of the game $G$ is $$u^T (\sum\limits_{t \in T} v_t I_t(p)/c_t)^{-1} u$$ with the minimising player chooses $v$ from the probability simplex on $T$; the maximising player simultaneously chooses $p$ from a convex and compact $\mathcal{P}$. The design for the worst case is as follows.

\begin{theorem}
\label{thm:worst-case-design}
  Any design $w^*$ for the worst case is of the form $w^*_t = v^*_t C/c_t$, $t \in T$, where $v^*$ is a Nash equilibrium strategy for the minimising player in the game $G$.
\end{theorem}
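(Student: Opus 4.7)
The plan is to reparametrize the worst-case design by the budget fractions $v_t := w_t c_t/C$, reduce the problem to a minimax problem whose payoff coincides with that of the game $G$, and then identify worst-case optimal designs with Nash equilibrium strategies of the minimizing player via a saddle-point argument. The budget constraint $\sum_{t \in T} w_t c_t \le C$ becomes $\sum_t v_t \le 1$, $v_t \ge 0$, and under the uniform positive-definite assumption on the information matrix $A \mapsto u^T A^{-1} u$ is non-increasing in the Loewner order, so any optimal design exhausts the budget and $v$ lies on the probability simplex $\Delta$ on $T$. A direct algebraic manipulation then gives
\[
u^T\!\left(\sum_{t} w_t I_t(p)\right)^{-1}\!u \;=\; \frac{1}{C}\, u^T\!\left(\sum_{t} v_t I_t(p)/c_t\right)^{-1}\!u \;=\; \frac{1}{C}\, f(v,p),
\]
so, up to the factor $1/C$, the worst-case design problem is exactly $\min_{v \in \Delta}\max_{p \in \mathcal{P}} f(v,p)$, the minimax problem for the minimizing player in $G$, where $f$ denotes the payoff of $G$.

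Next I would invoke a minimax theorem. The sets $\Delta$ and $\mathcal{P}$ are convex and compact; $f$ is continuous on their product because the uniform positive-definite assumption keeps the matrix inside the inverse bounded away from singular; and $f(\cdot,p)$ is convex on $\Delta$ since $A \mapsto u^T A^{-1} u$ is convex on the positive-definite cone and its argument is linear in $v$. Granting (quasi-)concavity of $f(v,\cdot)$ on $\mathcal{P}$, Sion's minimax theorem yields $\min_v\max_p f = \max_p\min_v f$ and the existence of a saddle point $(v^*,p^*)$. At any such saddle point one has $v^* \in \arg\min_v f(v,p^*)$ and $p^* \in \arg\max_p f(v^*,p)$, which is exactly a Nash equilibrium of $G$. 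By the standard equivalence between minimax strategies and equilibrium strategies in a zero-sum game with a value, the set of worst-case optimal $v^*$'s coincides with the set of Nash equilibrium strategies of the minimizer, and translating back via $w^*_t = v^*_t C/c_t$ gives the claim.

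The main obstacle I anticipate is the concavity-in-$p$ hypothesis needed to apply Sion's theorem. From \eqref{eqn:fisher-information-matrix}, $I_t(p)$ is a sum of rank-one matrices with denominators linear in $p$, so $f(v,\cdot)$ need not be concave or even quasi-concave on $\mathcal{P}$ in general. Strategies to handle this include verifying the property directly on the specific region $\mathcal{P}$ of interest by exploiting the monotonicity of $A \mapsto u^T A^{-1} u$ together with structural properties of $I_t(p)$, restricting attention to subregions of $\mathcal{P}$ where concavity holds, or replacing Sion's theorem by a purely topological existence result such as Glicksberg's theorem for continuous games and establishing the saddle-point characterisation on the mixed extension before arguing that pure minimizer strategies still suffice because $f(\cdot, p)$ is convex in $v$.
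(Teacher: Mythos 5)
Your overall architecture matches the paper's: reparametrise by budget fractions $v_t = w_t c_t/C$, note that the worst-case problem is $\min_{v}\max_{p}$ of the payoff of $G$ scaled by $1/C$, use convexity in $v$ and compactness to get a saddle point, and then invoke the interchangeability of equilibria in two-player zero-sum games. However, the step you defer --- concavity of $p \mapsto a(v;p)$ --- is not a peripheral hypothesis to be "granted" or worked around; it is the mathematical heart of the theorem, and your assertion that the payoff "need not be concave or even quasi-concave on $\mathcal{P}$ in general" is wrong for this model. The paper proves (Lemma~\ref{lemma:concavity}) that for fixed vectors $v_1,\dots,v_K$ the map $a \mapsto u^T\bigl(\sum_i a_i^{-1} v_i v_i^T\bigr)^{-1}u$ is \emph{concave} on the positive orthant (a relative of the concavity of the parallel sum of positive definite matrices), via an explicit Hessian computation reducing to the fact that $\tilde{V}^T(\tilde{V}\tilde{V}^T)^{-1}\tilde{V} \preceq I_K$. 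Since each denominator $\sum_s p_s q(y|s,t) + (1-p_1-p_2-p_3)q(y|4,t)$ in \eqref{eqn:fisher-information-matrix} is \emph{linear} in $p$, and $\sum_t (v_t/c_t) I_t(p)$ is a single sum of rank-one terms of exactly the form treated in the lemma, concavity in $p$ follows by composition with an affine map. Without this, your fallback routes do not rescue the theorem as stated: Glicksberg on the mixed extension gives the maximiser a mixed strategy, and the theorem speaks of Nash equilibria of the pure-strategy game $G$, which need not exist absent quasi-concavity in $p$.

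A second, smaller gap: your continuity claim ("the uniform positive-definite assumption keeps the matrix inside the inverse bounded away from singular") is not justified on all of the simplex $\Delta$. Assumption (A2) guarantees a uniformly positive definite $I_{t^*}(p)$ for a \emph{single} $t^*$; for $v$ with $v_{t^*}=0$ the matrix $\sum_t v_t I_t(p)/c_t$ may be singular and the payoff infinite. The paper handles this by restricting the minimiser to the sublevel set $\mathcal{C}=\{v: a(v;p)\le \sup_{p'}a(v^{(0)};p')\ \text{for all } p\}$, which is still convex and compact, before applying the fixed-point argument. You would need an analogous restriction (or an extended-real-valued convexity argument) to make the existence step rigorous.
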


As we shall see, the strategy sets are convex and compact, the pay-off function is concave in $p$ for a fixed $v$, convex in $v$ for a fixed $p$, and by Glicksberg's theorem \cite{glicksberg-52} there exists a Nash equilibrium. While the equilibrium may not be unique, it is well-known that this is not an issue for two-player zero-sum games. The minimising player may pick any Nash equilibrium and play his part of the equilibrium, i.e., there is no ``communication problem'' in a two-player zero-sum game.

\noindent 3. {\em Allocation across strata}. Surveys are also conducted across different strata or districts with differing disease burdens. For example, consider $D$ districts, with $p(d)$ being the disease spread parameter in district $d$, $d \in [D]= \{1, \ldots, D\}$. If the population fraction of district $d$ is $n_d$, with $\sum_{d \in [D]} n_{d} = 1$, then the weighted overall disease burden in the $D$ districts is $\sum_{d \in [D]} n_d (u^T \hat{p}(d))$ where $\hat{p}(d)$ is the MLE for district $d$ when the true parameter is $p(d)$. The next result indicates how the total budget must be allocated across the $D$ districts to minimise the variance of $\sum_{d \in [D]} n_d (u^T \hat{p}(d))$, subject to a total budget of $C$.

\begin{theorem}
\label{thm:weighted}
Let the parameters for the districts be $p(d)$ and let the population fractions be $n_d$, $d = 1, \ldots, D$.
The (local) c-optimal design for the weighted estimate $\sum_{d \in [D]} n_d (u^T \hat{p}(d))$ is as follows. The optimal budget allocation $C_d$ for district $d$ is proportional to $$n_d \sqrt{a(v^*(p(d));p(d))},$$ i.e.,
\[
  C_d = C \cdot \frac{n_d \sqrt{a(v^*(p(d));p_d)}}{\sum_{d' \in [D]} n_{d'} \sqrt{a(v^*(p(d'));p(d'))}}, \quad d = 1, \ldots, D,
\]
where $v^*(p(d))$ solves \eqref{eqn:objective-reduction} for district $d$. The optimal design in each district is given by $w_t^*(d) = v_t^*(p(d))C_d /c_t$ for each $t$ and $d$.
\end{theorem}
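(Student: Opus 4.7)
The plan is to decompose the problem into an across-district budget allocation step and a within-district design step, exploiting the independence of the surveys in different districts.

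First, I would note that the surveys in the $D$ districts are conducted on disjoint populations, so the MLEs $\hat{p}(d)$ are independent. Hence by asymptotic normality,
\[
\text{Var}\!\left(\sum_{d \in [D]} n_d \, u^T \hat{p}(d)\right) \;=\; \sum_{d \in [D]} n_d^2 \, \text{Var}\!\left(u^T \hat{p}(d)\right),
\]
and within each district $d$ the variance $\text{Var}(u^T \hat{p}(d)) = u^T(\sum_t w_t(d) I_t(p(d)))^{-1} u$ depends only on the design $(w_t(d))_{t \in T}$ for that district. If the per-district budget is $C_d := \sum_t w_t(d) c_t$, then Theorem \ref{thm:allocation-reduction} applied district-by-district tells us that the minimum achievable value of $\text{Var}(u^T \hat{p}(d))$ given budget $C_d$ is $a(v^*(p(d)); p(d))/C_d$, achieved by $w_t^*(d) = v_t^*(p(d)) C_d / c_t$.

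Second, substituting this in, the remaining outer problem is to choose $C_1, \ldots, C_D \geq 0$ with $\sum_d C_d = C$ to minimise
\[
f(C_1, \ldots, C_D) \;=\; \sum_{d \in [D]} \frac{n_d^2 \, a(v^*(p(d)); p(d))}{C_d}.
\]
Writing $\alpha_d := n_d^2 a(v^*(p(d)); p(d)) \ge 0$, this is a smooth strictly convex problem on the open simplex. A Lagrangian computation gives the stationarity condition $-\alpha_d / C_d^2 = -\lambda$, so $C_d \propto \sqrt{\alpha_d} = n_d \sqrt{a(v^*(p(d)); p(d))}$. Normalising to satisfy $\sum_d C_d = C$ yields exactly the formula in the theorem statement. (Alternatively, the Cauchy--Schwarz inequality $\sum_d \alpha_d/C_d \cdot \sum_d C_d \ge (\sum_d \sqrt{\alpha_d})^2$ gives the lower bound with equality iff $C_d \propto \sqrt{\alpha_d}$.)

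The main obstacle is essentially bookkeeping rather than conceptual: one has to justify cleanly that the joint design problem genuinely decouples across districts, which relies on (i) independence of the $D$ MLEs and (ii) the fact that the within-district optimal scaling from Theorem \ref{thm:allocation-reduction} is linear in the district budget. Once these are in place, combining the two steps gives $w_t^*(d) = v_t^*(p(d)) C_d / c_t$ with $C_d$ as stated, completing the proof.
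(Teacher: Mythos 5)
Your proposal is correct and follows essentially the same route as the paper: decouple into a within-district problem solved by Theorem \ref{thm:allocation-reduction} (the paper formalises this with auxiliary budget-fraction variables $m_d$) and an outer allocation problem $\min \sum_d n_d^2 a(v^*(p(d));p(d))/C_d$ solved by a KKT/stationarity computation. Your Cauchy--Schwarz alternative for the outer step is a nice elementary substitute for the paper's KKT argument but does not change the structure of the proof.
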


Thus a larger fraction of the total budget should be allocated to districts with a larger population, and to districts with a greater uncertainty in the estimate as measured by the standard error. The precise mathematical relation is the content of the above theorem. If there is no basis for a guess of $p(d)$ in each district, then the worst-case design could be employed in each district. Theorem \ref{thm:weighted} then implies that the budget allocation is in the proportion of the population. Theorem \ref{thm:weighted} is also applicable to other kinds of stratification, for e.g., sex, age groups, risk categorisation, pre-existing health conditions, etc., so long as the fraction in each component of the stratification under consideration\footnote{For example, we should know the male fraction of the population and the female fraction the population for the sex-based stratification. These replace the $n_d$ in Theorem \ref{thm:weighted}.} is known.

\noindent 4. {\em Enabling on-the-ground decisions}. Often the surveyors have access to additional information about the participants that might affect the quality of the test outcomes. Let us consider the example of participants presenting with or without symptoms. This is useful information that can inform what tests should be employed. The Karnataka serosurvey found that the RAT is more sensitive on participants with symptoms than on participants without symptoms ($68\%$ and $47\%$, respectively, see \cite{202012medrxiv_BabEtAl}). Let $s$ by the symptom-related observable with $s = 0$ denoting the absence of symptoms and $s=1$ denoting the presence of symptoms. Let $r_s$, $s=0,1$ denote the asymptomatic and the symptomatic fractions of the population. Let $p(s)$ be the model parameter for substratum $s$, $s=0,1$. The goal is to estimate the total burden $\sum_{s=0,1} r_s (u^T p(s))$. Assuming MLE in each substratum, the objective continues to be the minimisation of the variance of $\sum_{s=0,1} r_s (u^T \hat{p}(s))$.

Since the tests' reliability depend on whether a participant is symptomatic or asymptomatic, let $I_t^s(p(s))$ be the Fisher information matrix associated with the group $s$, when the subset of tests employed is $t$ and the group $s$ parameter is $p(s)$. Let $v^*(p(s))$ denote the optimum allocation of the budget\footnote{Note that a more proper notation would have been $v^*(p(s);s)$ at the expense of cumbersome notation. We will omit the dependence of $v^*$ on $s$ for notational convenience.}, given to group $s$, across the subsets of tests. Further, let $a_s(v^*(p(s));p(s))$ denote the value of \eqref{eqn:objective-reduction} for group $s$, when the parameter is $p(s)$ and the Fisher information matrices are $I_t^s(p(s))$, $t \in T$. The next result is the following.

\begin{theorem}
  \label{thm:symptom}
  The optimal budget allocation $C_s$ for group $s$ is in proportion to $$r_s \sqrt{a_s(v^*(p(s));p(s))},$$ i.e.,
\[
  C_s = C \frac{r_s \sqrt{a_s(v^*(p(s));p(s))}}{\sum\limits_{s' =0,1} r_{s'} \sqrt{a_{s'}(v^*(p(s'));p(s'))}}, \quad s = 0,1,
\]
where $v^*(p(s))$ solves \eqref{eqn:objective-reduction} for group $s$. The optimal design for each group $s$ is given by $w_t^*(s) = v_t^*(p(s))C_s/c_t$ for each $t$ and $s$.
\label{thm:allocation-symptomatic-model}
\end{theorem}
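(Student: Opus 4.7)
The plan is to mirror the proof strategy of Theorem \ref{thm:weighted}, since Theorem \ref{thm:symptom} is structurally a two-stratum version of that result, with the only difference being that the Fisher information matrices $I_t^s(p(s))$ now depend on the group $s$ (because test sensitivities differ between symptomatic and asymptomatic participants). First, because samples drawn from the symptomatic and asymptomatic subpopulations are independent, the within-group MLEs $\hat p(0)$ and $\hat p(1)$ are asymptotically independent. Therefore the asymptotic variance decomposes as
\[
\text{Var}\Bigl(\sum_{s=0,1} r_s\, u^T\hat p(s)\Bigr) = \sum_{s=0,1} r_s^2 \,\text{Var}\bigl(u^T\hat p(s)\bigr),
\]
and the overall problem reduces to controlling the within-group variances through the budget split $(C_0,C_1)$ with $C_0+C_1 = C$.

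Next I would apply Theorem \ref{thm:allocation-reduction} separately inside each group. Given a per-group budget $C_s$, that theorem (instantiated with the Fisher information matrices $I_t^s(p(s))$) implies that the minimum achievable asymptotic variance of $u^T\hat p(s)$ is
\[
\frac{a_s(v^*(p(s));p(s))}{C_s},
\]
attained by setting $w_t^*(s) = v_t^*(p(s))\, C_s / c_t$ for each $t \in T$. Substituting this lower bound into the decomposition above reduces the outer problem to a one-dimensional convex allocation problem:
\[
\min_{\substack{C_0,C_1 \ge 0 \\ C_0+C_1 = C}} \; \sum_{s=0,1} \frac{r_s^2\, a_s(v^*(p(s));p(s))}{C_s}.
\]

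Finally, I would solve this problem by Cauchy--Schwarz (equivalently, by a one-line Lagrange multiplier computation). Setting $\alpha_s := r_s\sqrt{a_s(v^*(p(s));p(s))}$, Cauchy--Schwarz gives
\[
\Bigl(\sum_{s=0,1}\alpha_s\Bigr)^{\!2} = \Bigl(\sum_{s=0,1} \frac{\alpha_s}{\sqrt{C_s}}\cdot\sqrt{C_s}\Bigr)^{\!2} \le \Bigl(\sum_{s=0,1} \frac{\alpha_s^2}{C_s}\Bigr)\Bigl(\sum_{s=0,1} C_s\Bigr),
\]
with equality iff $C_s \propto \alpha_s$. Combined with $C_0+C_1 = C$, this yields exactly the formula stated in the theorem, and the corresponding within-group designs $w_t^*(s) = v_t^*(p(s))C_s/c_t$ are then inherited directly from Theorem \ref{thm:allocation-reduction}.

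I do not expect any serious obstacle; the argument is a clean two-level optimisation. The one step worth writing carefully is the independence/variance decomposition in the first paragraph, since it underpins the reduction to a scalar budget-split problem. After that, the outer optimisation is essentially textbook, and the overall structure is identical to the proof of Theorem \ref{thm:weighted} with $D=2$ and the stratum-specific Fisher informations in place of the district-level ones.
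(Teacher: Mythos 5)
Your proposal is correct and follows essentially the same route as the paper: the paper's proof of Theorem~\ref{thm:symptom} simply invokes the proof of Theorem~\ref{thm:weighted} with $s$ in place of $d$, and that proof is exactly your two-level decomposition --- an inner per-group application of Theorem~\ref{thm:allocation-reduction} yielding the value $a_s(v^*(p(s));p(s))/C_s$, followed by an outer convex budget-split problem. The only cosmetic difference is that you close the outer problem with Cauchy--Schwarz where the paper uses Karush--Kuhn--Tucker conditions; both give $C_s \propto r_s\sqrt{a_s(v^*(p(s));p(s))}$.
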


Note that the optimal allocation in Theorem \ref{thm:allocation-symptomatic-model} is similar to that in Theorem \ref{thm:weighted}, with the difference being $a_s(v^*(p(s));p(s))$ depending on the group $s$ in Theorem \ref{thm:allocation-symptomatic-model}. The optimal budget for $C_0$ and $C_1$ suggests how many asymptomatics and symptomatics must be tested. Furthermore, the test prescription adapts to the presence or absence of symptoms presented by the participant. {Finally, Theorem~\ref{thm:symptom} is also applicable to other settings where information about participants' variables and factors are observable, and are known to affect the quality of testing, e.g., severity of the symptoms and the possible high viral load leading to better sensitivity of viral RNA detection tests.}

\subsection{Numerical Examples} \label{sec:num}
In this section we shall discuss several numerical examples illustrating the applications of our main results.

We begin with some examples connected to Theorem
\ref{thm:allocation-reduction}. Consider a local c-optimal design at
$p = (0.10, 0.30, 0.01)$, the fractions of those with active infection
alone, those with antibodies alone, and those that show both active
infection and antibodies. Suppose that the budget is
\rupee\,1,00,00,000 = \rupee\,1 Crore. Let us fix the RAT and antibody
test costs at \rupee\,450 and \rupee\,300, respectively. In the first
row of Table \ref{tab:numericalExamples}, the RTPCR test costs
\rupee\,1,600. The optimal allocation is to run antibody tests alone
on 521 participants ($t = (0,0,1)$) and the RAT and antibody tests on
13,125 participants ($t = (1,0,1)$). In particular, no RTPCR tests are
done. In the second row of Table \ref{tab:numericalExamples}, the
RTPCR test cost comes down to \rupee\,1000. Then the optimal
allocation is to run antibody tests ($t = (0,0,1)$) on a larger number
of participants, namely 8,000, and the RTPCR and antibody tests on
5,846 participants ($t = (0,1,1)$). In particular, RTPCR is
sufficiently competitive that no RAT tests are done. In the third row
of Table \ref{tab:numericalExamples}, the RTPCR test costs only\footnote{This big reduction in cost may come from pooled testing strategies.}
\rupee\,100. The optimal allocation is to run exactly one test
pattern, the RTPCR and the antibody tests ($t = (0,1,1)$), on 25,000
participants.

\begin{table}[t]
    \centering
    \caption{\small Numerical examples for a budget of \rupee\,1 Crore.}
    {\tiny
\begin{tabular}{|l|r|r|p{0.25in}|c|c|c||c|c|c|c|}
  \hline
  {} & \multicolumn{3}{c|}{Test cost in \rupee} & \multicolumn{3}{c||}{Parameter or parameter range} & \multicolumn{4}{c|}{Optimal design (quantised to $\mathbb{Z}_+$)} \\
 Criterion & RAT & RTPCR & Anti-body & $p_1$ & $p_2$ & $p_3$ & S/A & (0,0,1) & (1,0,1) & (0,1,1) \\
  \hline
  \hline
  Local & 450 & 1,600 & 300 & 0.10 & 0.30 & 0.01 & {-} & 521 & 13,125 & 0 \\
  \hline
  Local & 450 & 1,000 & 300 & 0.10 & 0.30 & 0.01 & {-} & 8,000 & 0 & 5,846 \\
  \hline
  Local & 450 & 100   & 300 & 0.10 & 0.30 & 0.01 & {-} & 0 & 0 & 25,000 \\
  \hline
  Worst-case & 450 & 100   & 300 & 0.01-0.15 & 0.10-0.50 & 0.00-0.02 & {-} & 838 & 0 & 24,371 \\
  \hline
  On-the-ground, 10\% symptomatic & \multirow{2}{*}{450} & \multirow{2}{*}{1,600}   & \multirow{2}{*}{300} & \multirow{2}{*}{0.10} & \multirow{2}{*}{0.30} & \multirow{2}{*}{0.01} & S & 300 & 1,046 & 0 \\
   RAT sensitivity 0.68(S) 0.47(A) &     &       &     &           &
            &           & A & 0 & 12,167 & 0 \\
  \hline
\end{tabular}
}
\label{tab:numericalExamples}
\end{table}

Our next set of example is with regard to Theorem
\ref{thm:worst-case-design}. Consider the numerical example in the
fourth row of Table \ref{tab:numericalExamples} where the RTPCR test
cost is only \rupee\,100. The range for $p_2$ comes from the
Karnataka's serosurvey \cite{202012medrxiv_BabEtAl} where the
districts' total burden varied from 8.7\% to 45.6\%. The ranges for
$p_1$ and $p_3$ also come from \cite{202012medrxiv_BabEtAl}. The
worst-case (obtained via a grid search with 0.01 increments) is near
$p = (0.06, 0.45, 0.0)$. The optimal design is to spend 2.5\% of the
budget on antibody tests alone, and 97.5\% of the budget on the
combination of RTPCR and antibody tests. For the indicated costs, this
comes to 838 participants getting only antibody tests and 24,371
participants getting the RTPCR and antibody tests, see the fourth row
of Table \ref{tab:numericalExamples}.

Since an overwhelming part of the budget is for the test pattern $(0,1,1)$, for logistical reasons,  all participants will be administered RTPCR and antibody tests in the second serosurvey for Karnataka (January 2021). Calculations (that we do not present here) indicate that the worst-case variance increases by a factor 1.0023, and so the number of samples must be increased by this factor to achieve the same margin of error and confidence. But the logistical simplicity is well worth extra cost, considering the magnitude and decentralised nature of the survey.

{We now discuss an insightful numerical example related to Theorem \ref{thm:allocation-symptomatic-model}. Consider a setting where the RAT costs \rupee\,450, the RTPCR costs \rupee\,1,600 and the antibody test costs \rupee\,300. Let us go back to the local c-optimal design criterion for $p = (0.1, 0.3, 0.01)$. Additionally, let 10\% of the population be symptomatic ($r_1 = 0.1$). The optimal allocations under Theorem \ref{thm:allocation-symptomatic-model} indicate that roughly 91.2\% of the budget must be spent on the asymptomatic population and 8.8\% on the symptomatic population. The reduction for the symptomatic population is because the estimate is more reliable (given the better sensitivity of the RAT). As expected, the RTPCR test is too expensive and is completely skipped. More interestingly, all asymptomatic participants should be administered the RAT and the antibody tests whereas among the symptomatic participants, a sizable fraction should be administered the antibody test alone. In other words, a smaller fraction of the symptomatic fraction get the RAT. This might be paradoxical at first glance, for one might expect an increased use of RAT on participants where it is more effective. There is however an insightful resolution to this paradox: given that the active infection fraction estimate is likely to be better in the symptomatic population than in the asymptomatic population, we must invest a portion of the budget for the symptomatic population on antibody tests alone to improve the estimation of the fraction with past infection. In the last row in Table \ref{tab:numericalExamples}, we have indicated the numbers to be tested according to the optimal design.}

\subsection{Prior work and context for our findings}

The locally \emph{c-optimal} design criterion is well-known \cite[Chapter~5]{pronzato-pazman-13}. Elfving's seminal work \cite{elfving-52} shows that, for a linear regression model, the c-optimal design is a convex combination of the extreme points in the design space. Dette and Holland-Letz~\cite{dette-09} extended the characterisation of the c-optimal design to nonlinear regression models with unequal noise variances. Our Theorem \ref{thm:allocation-reduction} could be recovered from the results of Dette and Holland-Letz~\cite{dette-09} with some additional work to characterise the solution.  Chernoff~\cite{chernoff-53} considered the closely related criterion of minimising {the trace of $(\sum_{t \in T} w_t I_t(p))^{-1}$} and obtained a convex-combination-of-extreme-points characterisation. This criterion is useful in our setting if the surveyor is interested in estimating the individual $p_1, p_2$, and $p_3$ with good accuracies, not just their sum, and gives equal importance to the three variances. Other design criteria include an optimality criterion that takes into account the probability of observing a desired outcome in generalised linear models~\cite{mcgree-eccleston-08}, variance minimisation for linear models but with non-Gaussian noise~\cite{gao-zhou-14}, etc. For us, c-optimality is the most appropriate since our interest is in estimating $u^T p$.

While the above works deal with locally optimal designs, we {also} consider the worst-case design problem, weighted design, and design for on-the-ground decisions in Theorems \ref{thm:worst-case-design}, \ref{thm:weighted}, and \ref{thm:symptom}, respectively. The worst-case design problem was considered in~\cite[Chapter~8]{pronzato-pazman-13} where a necessary and sufficient condition for optimality is given. In principle it should be possible to derive Theorem \ref{thm:worst-case-design} from that starting point. Our game-theoretic characterisation was more appealing. To the best of our knowledge, Theorem \ref{thm:symptom} and applications of Theorems \ref{thm:allocation-reduction}-\ref{thm:symptom} to COVID-19 survey design are new. We hope that other survey designers will make use of our findings at this critical juncture of the pandemic, so that immunisation plans can be better optimised.

\begin{remark} \label{rem:gen}

      Our focus in this note has been optimal design of
      serosurveys for disease burden estimation. 
However our results hold for more general epidemiological
surveys. The dimensionality of the parameter $p$ can be any natural number $k$, and the $u$ defining the estimate $u^T p$ can be any vector in ${\mathbb R}^k$.  In any such a model our results will hold as long as the MLE exists and  the corresponding Fisher information matrix, $I_t(p)$, corresponding to the parameter $p$, satisfies the assumption (A1) stated below for Theorem \ref{thm:allocation-reduction}, Theorem \ref{thm:weighted}, and Theorem \ref{thm:symptom}, and the assumption (A2), also stated below, for Theorem \ref{thm:worst-case-design}, see Remark \ref{rem:proof} in the next section.

For $t \in T$ and $p \in \mathcal{P}$, let $\lambda_p(t)$
denote the smallest eigenvalue of $I_t(p)$. We now make precise the assumptions.

\begin{enumerate}[label=({A\arabic*})]
\item For a given $p \in \mathcal{P}$, there exists $t^* \in T$  such that the matrix $I_{t^*}(p)$ is positive definite.
\label{assm:pd}
\item There exists $t^* \in T$  such that $\inf_{p \in \mathcal{P}} \lambda_p(t^*) > 0$.
\label{assm:pd-worst-case}
\end{enumerate}
\end{remark}

\section{Proof of Theorems} \label{sec:proof-A}

Recall that $C>0$ is the total budget, $T $ is all the nontrivial subsets of tests, $c_t$ is the sum of the costs of the individual tests constituting $t \in T$, and the survey design is denoted by $(w_t, t \in T)$, where $w_t \geq 0$ is the number of participants (relaxed to a real-value) who are administered the subset $t$ of tests, $w_t \in \mathbb{R}_+$. The cost of this design is $\sum_{t \in T} w_t c_t$.

We begin with a specific observation with regard to our model.
    \begin{remark} \label{rem:proof}Let $t^* = (1,1,1)$.
      When the specificities and sensitivities are as given by Table~\ref{tab:sens-spec}, it is easy to verify (A1) and (A2) hold. Indeed, from \eqref{eqn:fisher-information-matrix}, defining
      $$
      v(y,t^*) := \left( q(y|1,t^*) - q(y|4,t^*), q(y|2,t^*) - q(y|4,t^*), q(y|3,t^*) - q(y|4,t^*) \right)^T,
      $$
      and the probability
      $$
      P(y|t^*) := \sum_{s=1}^3 p_s q(y|s,t^*) + (1-p_1-p_2-p_3)q(y|4,t^*),
      $$
      we have for any nonzero $x \in \mathbb{R}^3$
      \begin{eqnarray*}
        x^T I_{t^*}(p) x
        & = & x^T \left( \sum_{y \in \mathcal{Y}_{t^*}} \frac{1}{P(y|t^*)} v(y,t^*) v(y,t^*)^T \right) x \\
        & = & \sum_{y \in \mathcal{Y}_{t^*}} \frac{1}{P(y|t^*)} \left|x^T v(y,t^*)\right|^2\\
        & \geq &  \sum_{y \in \mathcal{Y}_{t^*}} \left|x^T v(y,t^*)\right|^2 \\
        & > & 0,
      \end{eqnarray*}
      where the last strict inequality holds because $\{ v(y,t^*), y \in \mathcal{Y}_{t^*} \}$ is linearly independent, a fact that can verified for the given sensitivities and the specificities. This not only verifies assumption (A1) for our example, but also assumption (A2).
\end{remark}

We shall use the fact the model satisfies assumption (A1) or (A2) in the proofs below.

\begin{proof}[Proof of Theorem~\ref{thm:allocation-reduction}]
Let $v_t = w_t c_t /C, t \in T$. Recall $a(v;p)$ from \eqref{eqn:objective-reduction}. It is immediate that a minimiser of \eqref{eqn:objective-reduction} will yield  a c-optimal design given by \eqref{eqn:w^*_t}. To find a minimiser, first  consider the following optimisation problem
\begin{align}
\text{minimise} & \qquad  a(v;p) \label{eqn:objective-reduction-1} \\
\text{subject to} & \qquad \sum\limits_{t \in T} v_t \leq 1, \quad v_t \geq 0, \, t \in T. \label{eqn:objective-reduction-constraints}
\end{align}
{By Assumption~\ref{assm:pd}, there exists a $v^{(0)}$ such that $a(v^{(0)}; p) < \infty$, and hence the minimisation can be restricted to the set $\mathcal{C} = \{(v_t, t \in T): a(v;p) \leq a(v^{(0)}; p), v_t \geq 0, t \in T, \sum_{t \in T} v_t \leq 1\}$.} With this restriction on the constraint set, the above problem is a convex optimisation problem. Indeed, the objective function is convex on $\mathcal{C}$; for any $\lambda \in (0,1)$ and any two designs $v^{(a)}$ and $v^{(b)}$ { in $\mathcal{C}$,} the matrix
\begin{align*}
\lambda \left( \sum\limits_{t \in T} \frac{v^{(a)}_t I_t(p)}{c_t} \right)^{-1} + (1-\lambda) \left( \sum\limits_{t \in T} \frac{v^{(b)}_t I_t(p)}{c_t} \right)^{-1} - \left( \sum\limits_{t \in T} \frac{(\lambda v^{(a)}_t + (1-\lambda) v^{(b)}_t) I_t(p)}{c_t} \right)^{-1}
\end{align*}
is positive semidefinite, and the constraint set {$\mathcal{C}$} is a convex subset of $\mathbb{R}^{|T|}$. Moreover $\mathcal{C}$ is also compact: for any sequence $\{v^{(n)}, n \geq 1\}$ in $\mathcal{C}$, being a sequence in the compact set~(\ref{eqn:objective-reduction-constraints}), we can extract a subsequence that converges to an element $\tilde{v}$ of~(\ref{eqn:objective-reduction-constraints}); from the definition of $\mathcal{C}$, we must have $a(\tilde{v};p) \leq a(v^{(0)};p)$ which implies that $\tilde{v} \in \mathcal{C}$. Hence there exists a $v^* = (v^*_t, t\in T)$ that solves the above problem, see \cite[Section 4.2.2]{bvop}. If $\sum_{t \in T} v_t^* < 1$, then we may scale up the $v^*$ by a factor to use the full budget, satisfy the sum-constraint with equality, and strictly reduce the objective function by the same factor, which is a contradiction to $v^*$'s optimality. Hence $\sum_{t \in T} v_t^* = 1$. Hence $w^*_t = v^*_t C / c_t$ is the desired c-optimal design and the minimum variance is given by $a(v^*;p)/C$. This completes the proof.
\end{proof}

For information on the structure of an optimal solution, consider the Lagrangian
\begin{align*}
L(v, \lambda, \mu) = u^T \left( \sum\limits_{t \in T} v_t I_t(p)/c_t \right)^{-1} u - \sum\limits_{t \in T} \lambda_t v_t + \mu\left(\sum\limits_{t \in T} v_t -1\right).
\end{align*}
For each $t \in T$,
\begin{align*}
\partial_{v_t} L(w, \lambda, \mu) & = - u^T \left( \sum_{t' \in T} v_{t'} I_{t'}(p)/c_{t'} \right)^{-1} \frac{I_t(p)}{c_t} \left( \sum_{t' \in T} v_{t'} I_{t'}(p)/c_{t'} \right)^{-1} u  - \lambda_t + \mu.
\end{align*}
By the Karush–Kuhn–Tucker conditions \cite[Chapter~5]{bvop}, there exist non-negative numbers $(\lambda^*_t, t \in T)$ and $\mu^*$ such that
\begin{align}
  -u^T \left( \sum_{t' \in T} v^*_{t'} I_{t'}(p)/c_{t'} \right)^{-1} \frac{I_t(p)}{c_t}  \left( \sum_{t' \in T} v^*_{t'} I_{t'}(p)/c_{t'} \right)^{-1} & u  - \lambda^*_t + \mu^* = 0, t \in T, \label{eqn:kkt-trace-condition}\\
\lambda^*_t v^*_t= 0, t \in T, \label{eqn:slackness-1} \\
 \mu^*\left( \sum\limits_{t \in T} v^*_t - 1\right)  = 0. \label{eqn:slackness-2}
\end{align}
Clearly $v^* \not \equiv 0$ and $\mu^* > 0$, otherwise \eqref{eqn:kkt-trace-condition} is violated. Conditions~(\ref{eqn:kkt-trace-condition}) and~(\ref{eqn:slackness-1}) together with the fact that $w^*_t = v^*_t C/c_t$ imply that whenever $w^*_t > 0$ for some $t \in T$, we must have
\begin{align*}
 u^T \left( \sum\limits_{t \in T} v^*_t I_t(p)/c_t \right)^{-1} \frac{I_t(p)}{c_t} \left( \sum\limits_{t \in T} v^*_t I_t(p)/c_t \right)^{-1} u  = \mu^*.
\end{align*}
We can compute $\mu^*$ easily: multiplying by $v_t^*$, summing over $t$, and using $\sum_{t \in T} v_t^* = 1$, we see that $$\mu^* = u^T \left( \sum_{t \in T} v^*_t I_t(p)/c_t \right)^{-1} u = a(v^*;p).$$

The proof of Theorem~\ref{thm:worst-case-design} requires a preliminary lemma. Fix $K \geq 3$ and let $v_1, \ldots, v_K \in \mathbb{R}^3$. For $a \in \mathbb{R}^K$ such that $\sum_{i=1}^K a_i = 1$, define
\begin{align*}
I(a) = 	\sum_{i=1}^K \frac{1}{a_i} v_i v_i^T.
\end{align*}
We first prove the following lemma. A similar concavity property holds in the context of parallel sum of positive definite matrices; see~\cite[Theorem~4.1.1]{bhatia}. Let $\mathbb{R}^K_{++}$ denote the set of $K$-vectors whose entries are strictly positive.

\begin{lemma} Assume $I(a)$ is invertible for all $a \in \mathbb{R}^K_{++}$. The mapping $a \mapsto u^T
I(a)^{-1}u$ is concave on $\mathbb{R}^K_{++}$.
\label{lemma:concavity}
\end{lemma}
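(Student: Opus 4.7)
The plan is to express $u^T I(a)^{-1} u$ as the pointwise infimum, over an auxiliary parameter $x$, of functions that are each affine in $a$; concavity of $f(a) := u^T I(a)^{-1} u$ then follows at once from the standard fact that the pointwise infimum of a family of concave (here affine) functions is concave.

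First I would recast $I(a)$ in compact matrix form. Let $V$ denote the $3 \times K$ matrix whose $i$th column is $v_i$, and set $D(a) = \operatorname{diag}(a_1, \ldots, a_K)$, so that $I(a) = V D(a)^{-1} V^T$. The standing hypothesis that $I(a)$ is invertible forces $V$ to have full row rank. The key step (and really the only technical content) is the variational identity
\[
u^T I(a)^{-1} u \;=\; \min_{x \in \mathbb{R}^K \,:\, Vx = u} \; \sum_{i=1}^K a_i\, x_i^2, \qquad a \in \mathbb{R}^K_{++}.
\]
I would prove this by exhibiting both inequalities. For $\geq$, the equality-constrained quadratic program has a Lagrangian stationary point at $x^* = D(a)^{-1} V^T I(a)^{-1} u$, which is feasible because $V x^* = I(a)\, I(a)^{-1} u = u$, and whose objective value is $(x^*)^T D(a)\, x^* = u^T I(a)^{-1} V D(a)^{-1} V^T I(a)^{-1} u = u^T I(a)^{-1} u$. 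For $\leq$, observe that
\[
P \;:=\; D(a)^{-1/2} V^T I(a)^{-1} V D(a)^{-1/2}
\]
is symmetric and satisfies $P^2 = P$, so $P$ is an orthogonal projection and $P \preceq I$; this yields $V^T I(a)^{-1} V \preceq D(a)$, and hence for any feasible $x$,
\[
u^T I(a)^{-1} u \;=\; (Vx)^T I(a)^{-1} (Vx) \;=\; x^T V^T I(a)^{-1} V x \;\leq\; x^T D(a)\, x \;=\; \sum_{i=1}^K a_i x_i^2.
\]

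Once the identity is in hand, concavity is immediate: for each fixed feasible $x$, the map $a \mapsto \sum_i a_i x_i^2$ is affine on $\mathbb{R}^K_{++}$, and the pointwise infimum of affine functions is concave. The only step requiring genuine care is verifying the variational identity, but this is an essentially mechanical Lagrangian (or projection) computation; I do not anticipate a deeper obstacle.
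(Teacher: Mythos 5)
Your proof is correct, but it takes a genuinely different route from the paper. The paper proves concavity by brute force: it differentiates $f(a) = u^T I(a)^{-1}u$ twice, writes the Hessian in terms of $\beta_i = u^T I(a)^{-1}v_i$ and $B_{i,j} = v_i^T I(a)^{-1}v_j$, and reduces negative semidefiniteness of the Hessian to the statement that $\tilde{V}^T(\tilde{V}\tilde{V}^T)^{-1}\tilde{V} \preceq I_K$ for $\tilde{V} = V D(a)^{-1/2}$, which it verifies via an SVD. You instead prove the variational identity $u^T I(a)^{-1}u = \min_{x : Vx = u} \sum_i a_i x_i^2$ and invoke the fact that a pointwise infimum of affine functions of $a$ is concave, with no differentiation at all. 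Amusingly, the technical core is shared: your projection matrix $P = D(a)^{-1/2}V^T I(a)^{-1} V D(a)^{-1/2}$ is exactly the matrix the paper bounds by $I_K$, so both arguments hinge on the same operator inequality $V^T I(a)^{-1} V \preceq D(a)$. What your route buys is brevity and robustness: it sidesteps the second-derivative computation entirely, does not need $f$ to be twice differentiable, and is the standard "design-theory" argument that also explains the connection to the parallel-sum concavity result the paper cites from Bhatia. Your argument is complete as stated --- the minimizer $x^* = D(a)^{-1}V^T I(a)^{-1}u$ is exhibited explicitly, the feasible set $\{x : Vx = u\}$ is independent of $a$ (as required for the infimum argument), and invertibility of $I(a)$ does force $V$ to have full row rank, so $I(a)^{-1}$ and the projection $P$ are well defined.
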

\begin{proof}
Let $f(a) = u^T I(a)^{-1}u$. We have, for $i = 1,2,\ldots, K$,
\begin{align*}
\partial_{a_i} I(a)^{-1} &= -I(a)^{-1} (\partial_{a_i}I(a)) I(a)^{-1} \\
& = \frac{1}{a_i^2}I(a)^{-1} v_i v_i^T I(a)^{-1}.
\end{align*}
We also have, for $i, j = 1,2,\ldots, K$,
\begin{align*}
\partial^2_{a_i a_j} I(a)^{-1}& = \frac{I(a)^{-1} v_i v_i^T I(a)^{-1} v_j v_j^T I(a)^{-1} + I(a)^{-1} v_j v_j^T I(a)^{-1} v_i v_i^T I(a)^{-1}}{a_i^2 a_j^2} \\
& \qquad + \delta_{i=j} \left(\frac{-2}{a_i^3}\right) I(a)^{-1} v_i v_i^T I(a)^{-1}.
\end{align*}
Hence
\begin{align*}
\frac{1}{2}\partial^2_{a_i a_j} f(a)& = \frac{u^TI(a)^{-1}v_i \times v_i^T I(a)^{-1} v_j \times u^T I(a)^{-1} v_j}{a_i^2 a_j^2} \\
& \qquad - \delta_{i=j}\frac{1}{a_i^3} (u^T I(a)^{-1} v_i)^2.
\end{align*}
Let $\beta_i = u^T I(a)^{-1} v_i$ and let $B_{i,j} = v_i^T I(a)^{-1} v_j$. Then
\begin{align*}
\frac{1}{2}\partial^2_{a_i a_j}f(a) = \frac{\beta_i B_{i,j} \beta_j}{a_i^2 a_j^2} - \delta_{i=j} \frac{1}{a_i^3} \beta_i^2.
\end{align*}
For positive semidefiniteness of the Hessian of $f$, it suffices to show that
\begin{align*}
\sum_{i,j =1}^K \frac{x_i \beta_i B_{i,j}\beta_j x_j}{a_i^2 a_j^2}  - \sum_{i=1}^K \frac{x_i^2 \beta_i^2}{a_i^3} \leq 0
\end{align*}
for all $x \in \mathbb{R}^K$. With $\alpha_i = \frac{x_i \beta_i}{a_i \sqrt{a_i}}$, the above condition becomes
\begin{align*}
\sum_{i,j=1}^K \alpha_i \frac{B_{i,j}}{\sqrt{a_i a_j}} \alpha_j - \sum_{i=1}^K \alpha_i^2 \leq 0,
\end{align*}
which is the same as asking for the largest eigenvalue of the matrix with entries
\begin{align*}
\left( \frac{B_{i,j}}{\sqrt{a_i a_j}} \right)_{i,j = 1,2,\ldots, K}
\end{align*}
to be at most $1$. Note that
\begin{align*}
\frac{B_{i,j}}{\sqrt{a_i a_j}} & = \frac{1}{\sqrt{a_i}} v_i^T I(a)^{-1} v_j \frac{1}{\sqrt{a_j}} \\
& = \left(\frac{v_i}{\sqrt{a_i}}\right)^T I(a)^{-1}\left(\frac{v_j}{\sqrt{a_j}}\right) \\
& = \left(\frac{v_i}{\sqrt{a_i}}\right)^T \left(\sum_{i'} \left(\frac{v_{i'}}{\sqrt{a_{i'}}}\right) \left(\frac{v_{i'}}{\sqrt{a_{i'}}}\right)^T \right)^{-1} \left(\frac{v_j}{\sqrt{a_j}}\right).
\end{align*}
Let $\tilde{v}_i = \frac{v_i}{\sqrt{a_i}}$ and $\tilde{V} = [\tilde{v}_1, \ldots, \tilde{v}_K]$. Then the above entry is the $(i,j)$ entry of the matrix $\tilde{V}^T (\tilde{V} \tilde{V}^T)^{-1} \tilde{V}$. It therefore suffices to show that, in positive semidefinite ordering,
\begin{align*}
\tilde{V}^T (\tilde{V} \tilde{V}^T)^{-1} \tilde{V} \leq I_{K},
\end{align*}
that is, for each $z \in \mathbb{R}^K$, we must have
\begin{align*}
(\tilde{V}z)^T (\tilde{V} \tilde{V}^T)^{-1} \tilde{V}z \leq \|z\|^2.
\end{align*}
Let $\tilde{V}$ have the singular value decomposition $U \Sigma V^T$, where $U$ and $V$ are unitary matrices of appropriate sizes. Since $I(a)$ is invertible, it follows that $(\tilde{V}\tilde{V}^T)^{-1} = U \Lambda^{-1} U^T$ where $\Lambda = \Sigma \Sigma^T$. Also $\Sigma^T (\Sigma \Sigma^T)^{-1} \Sigma$ is a block diagonal matrix with $I_3$ and $O_{K-3}$ (all zero matrix square matrix of dimension $K-3$) on the diagonal. Therefore, the left-hand side of the above display becomes
\begin{align*}
z^T V \Sigma^T U^T U \Lambda^{-1} U^T U \Sigma V^T z & = z^T V \Sigma^T (\Sigma \Sigma^T)^{-1} \Sigma V^T z\\
& = z^T V \begin{bmatrix}
I_3 & O_{3,K-3}\\
O_{K-3,3} & O_{K-3}
\end{bmatrix} V^T z
\end{align*}
where $O_{m,n}$ is the all zero matrix of dimensions $m \times n$. Since $V$ is orthonormal, the lemma follows.
\end{proof}

\begin{proof}[Proof of Theorem~\ref{thm:worst-case-design}]
Consider the zero-sum game $G$. By assumption~\ref{assm:pd-worst-case}, there exists a design $v^{(0)}$ such that $\sup_{p \in \mathcal{P}} a(v^{(0)};p) < \infty$; hence, we may restrict the set of designs to $$\mathcal{C}:=\left\{(v_t, t \in T): v_t \geq 0, t \in T, \sum_{t \in T} v_t = 1, a(v;p) \leq \sup_{p' \in \mathcal{P}}a(v^{(0)};p') \text{ for all }p \right\}.
$$

\noindent Recall that, for a given pair of strategies $(v, p)$, the pay-off of the maximising player is $a(v; p) = u^T \left(\sum_{t \in T} v_t I_t(p)/c_t \right)u$. For each $p \in \mathcal{P}$, as argued in the proof of Theorem~\ref{thm:allocation-reduction}, the mapping $v \mapsto a(v;p)$ is convex on $\mathcal{C}$, and $\mathcal{C}$ is a convex and compact subset of the design space. Recall $q$ from \eqref{eqn:conditionalprobability} and the Fisher information matrix from \eqref{eqn:fisher-information-matrix}. The mapping
\[
  p \mapsto \sum_{s=1}^3 p_s q(y|s,t) + (1-p_1-p_2-p_3)q(y|4,t), \forall y \in \mathcal{Y}_t, \forall t \in T
\]
is linear. So by Lemma~\ref{lemma:concavity}, for each fixed $v \in \mathcal{C}$, the mapping $p \mapsto a(v;p)$ is concave. Also, $\mathcal{P}$ is convex and compact. Hence by Glicksberg's fixed point theorem~\cite{glicksberg-52} there exists a Nash equilibrium for the game $G$. The result now follows from Theorem~\ref{thm:allocation-reduction} and the interchangeability property of Nash equilibria in two-player zero-sum games (see~\cite[Theorem~3.1]{tijs-game-theory}).
\end{proof}

\begin{proof}[Proof of Theorem~\ref{thm:weighted}]
A local c-optimal design that minimises the variance of the estimator $\sum_{d \in [D]}n_d (u^T\hat{p}(d))$ is obtained by finding a solution $v_{t,d}^*, t \in T, d = 1, \ldots, D$, to the following optimisation problem:
\begin{align}
\text{minimise} & \qquad  \sum_{d =1}^D n_d^2 u^T  \left( \sum_{t \in T} v_{t,d} I_t(p(d))/c_t \right)^{-1} u  \label{eqn:objective-reduction-weighted}\\
\text{subject to} & \qquad \sum_{t \in T} \sum_{d=1}^D v_{t,d} \leq 1, \quad v_{t,d} \geq 0, \, t \in T, \, d = 1, \ldots , D. \nonumber
\end{align}
With additional variables $(m_d, d = 1,\ldots, D)$ which can be interpreted as the budget fractions given to each district, the value of the above optimisation problem is equal to the value of the following problem:
\begin{align}
\text{minimise} & \qquad  \sum_{d =1}^D n_d^2 \times u^T  \left( \sum\limits_{t \in T} v_{t,d} I_t(p(d))/c_t \right)^{-1} u  \label{eqn:objective-reduction-weighted-1}\\
\text{subject to} & \qquad \sum\limits_{t \in T}  v_{t,d} \leq m_d, d = 1, \ldots, D, \nonumber \\
& \qquad v_{t,d} \geq 0, \, t \in T, \, d = 1, \ldots , D,  \nonumber \\
& \qquad \sum_{d=1}^D m_d  \leq 1, \nonumber \\
& \qquad m_d \geq 0, d = 1, \ldots, D \nonumber.
\end{align}
For a given $(m_d, d = 1, \ldots D)$, the optimisation in \eqref{eqn:objective-reduction-weighted-1} over the variables $(v_{t,d}, t \in T, d = 1, \ldots, D)$ can be performed separately for each $d$. {As argued in the proof of Theorem~\ref{thm:allocation-reduction}, using assumption~\ref{assm:pd} for each $p(d), d=1,\ldots, D$, there exists a solution to the problem~(\ref{eqn:objective-reduction}) with $p(d)$ in place of $p$; we denote it by $v^*(p(d))$ and the corresponding value by $a(v^*(p(d)); p(d))$}. Hence the above problem reduces to
\begin{align}
\text{minimise} & \qquad  \sum_{d =1}^D \frac{n_d^2}{m_d} a(v^*(p(d));p(d)) \label{eqn:objective-reduction-weighted-2}\\
\text{subject to} & \qquad \sum_{d=1}^D m_d  \leq 1, \nonumber \\
& \qquad m_d \geq 0, d = 1, \ldots, D \nonumber.
\end{align}
Note that \eqref{eqn:objective-reduction-weighted-2} is a convex problem in the variables $(m_d, d=1,\ldots, D)$; indeed the mapping $(m_d, d=1,\ldots, D) \mapsto \sum_{d \in [D]} \frac{n_d^2}{m_d} a(v^*(p(d));p(d))$ is lower-semicontinuous, convex and the constraint set is a compact and convex subset of $\mathbb{R}_+^D$. Hence, there exists a solution $(m_d^*, d=1, \ldots, D)$ to the above problem. Consider the Lagrangian
\begin{align*}
L(m, \lambda, \mu) =  \sum_{d =1}^D \frac{n_d^2}{m_d} a(v^*(p(d));p(d)) - \sum_{d=1}^d \lambda_d m_d + \mu\left(\sum_{d=1}^D m_d-1\right).
\end{align*}
By the Karush-Kuhn-Tucker conditions \cite[Chapter~5]{bvop}, there exist non-negative numbers  $(\lambda_d^*, d=1,\ldots, D)$ and $\mu^*$ such that the following conditions hold:
\begin{align}
-\frac{n_d^2 a(v^*(p(d)); p(d))}{(m_d^*)^2} - \lambda_d^* + \mu^* & =0, d=1, \ldots, D, \label{eqn:kkt-objective-reduction-weighted-2}\\
\lambda_d^* m_d^* & = 0, d = 1, \ldots, D, \label{eqn:slackness-objective-reduction-weighted-2-lambda} \\
\mu^*\left(\sum_{d=1}^D m_d^* - 1 \right) &= 0 \label{eqn:slackness-objective-reduction-weighted-2}.
\end{align}
If $\mu^* = 0$, then~(\ref{eqn:kkt-objective-reduction-weighted-2}) is violated for all $d$; hence $\mu^* > 0$ and by~(\ref{eqn:slackness-objective-reduction-weighted-2}), $\sum_{d \in [D]} m^*_d = 1$. Whenever $m_d^* >0$, we must have $\lambda_d^* =0$, and by~(\ref{eqn:kkt-objective-reduction-weighted-2}),  $m_d^* = n_d\sqrt{\mu^* a(v^*(p(d)); p(d))}$. Using $\sum_{d=1}^D m_d^* = 1$, we can solve for $\mu^*$ to get
\begin{equation}
\label{eqn:md*}
m_d^* = \frac{n_d \sqrt{a(v^*(p(d)); p(d))}}{\sum_{d'=1}^D n_{d'}\sqrt{a(v^*(p(d')); p(d'))}}.
\end{equation}
This solves problem \eqref{eqn:objective-reduction-weighted-2}. By setting
$$
v^*_{t,d} = v^*_t(p(d)) m_d^*, \quad t \in T, d=1,\ldots, D
$$
where $m_d^*$ is as given in \eqref{eqn:md*}, we also solve problem \eqref{eqn:objective-reduction-weighted-1}. By the equivalence of the problems \eqref{eqn:objective-reduction-weighted} and \eqref{eqn:objective-reduction-weighted-1}, we have a solution to problem \eqref{eqn:objective-reduction-weighted}, which can be described as follows: Allocate $C_d = C m_d^*$ to district $d$, and by Theorem \ref{thm:allocation-reduction}, allocate $w^*_t(d) = v^*_t(p(d)) C_d/c_t$ for test pattern $t$ in district $d$. This completes the proof.
\end{proof}

\begin{proof}[Proof of Theorem~\ref{thm:allocation-symptomatic-model}]
The proof is straightforward and follows the same arguments used in the proof of Theorem~\ref{thm:weighted}; we only have to use $s$ in place of $d$ and recognise that the Fisher information matrices may depend on $s$.
\end{proof}

{\bf Acknowledgements:}
{SA's research was supported in part by MATRICS. GRB's research is funded by an Intermediate Fellowship by the Wellcome Trust DBT India Alliance (Clinical and Public Health Research Fellowship); grant number: IA/CPHI/14/1/501499. AI and NKV were supported by the IISc-Cisco Centre for Networked Intelligence (CNI), Indian Institute of Science, through a Google grant. MM and SS were supported by the CNI through a Hitachi grant. NR and SY works were supported by CNI through a Cisco-IISc PhD Research Fellowship. RS's work was supported by the Google Gift Grant. RS's work was done while on sabbatical leave at Strand Life Sciences, and RS gratefully acknowledges discussions with and feedback from Vamsi Veeramachaneni and Ramesh Hariharan.}

\bibliographystyle{abbrv}
\bibliography{OptimalTestingv3}

\end{document}